\documentclass[letterpaper, 10 pt, conference]{ieeeconf}  

\IEEEoverridecommandlockouts                              

\usepackage{graphicx} 
\usepackage{amsmath,amssymb,amsfonts}
\usepackage{algorithm}
\usepackage{algorithmic}
\usepackage{booktabs}
\usepackage{graphicx}    
\usepackage{subcaption}  
\usepackage[capitalise]{cleveref}
\usepackage{comment}
\usepackage{cite}
\newtheorem{corollary}{Corollary}
\newtheorem{remark}{Remark}
\allowdisplaybreaks

\title{\LARGE \bf
Real-Time In-Domain Congestion Control for the LWR Traffic Model via Control Barrier Functions
}

\author{Zehang Zhu$^{1}$, Brian Block$^{1}$ and Stephanie Stockar$^{1}$
\thanks{This material is based upon work supported by the National Science Foundation CAREER Award 2042354.}
\thanks{$^{1}$Authors are with the Department of Mechanical and Aerospace Engineering and the Center for Automotive Research, The Ohio State University, 930 Kinnear Road, Columbus, OH 43212, USA {\tt\small \{zhu.4178, block.168, stockar.1\}@osu.edu}}%
}

\begin{document}

\maketitle

\thispagestyle{empty}
\pagestyle{empty}

\begin{abstract}
This paper presents a control barrier function-based method for real-time in-domain congestion control of the Lighthill-Whitham-Richards traffic model. Traffic congestion is formulated as a distributed safety control problem, leading to a infinite-dimensional optimization problem. Through the discretization and Karush-Kuhn-Tucker (KKT) analysis, the problem is converted into a high-dimensional quadratic program. A structure-exploiting primal-dual active set algorithm is then developed to compute the safe control input in real time, with convergence guarantees. Numerical simulations with different nominal controllers demonstrate the effectiveness and real-time feasibility of the proposed approach.
\end{abstract}

\section{Introduction}
In recent years, mobility demand has increased continuously, causing an increase in traffic congestion. From 2023 to 2024, the average time spent in congestion rose by $20\%$ \cite{FHWA2024Urban}. 
Advanced traffic control strategies, supported by rapid developments in connected transportation technologies, offer a promising means of reducing congestion and improving traffic efficiency and safety \cite{Siri2021Freeway}. 

Macroscopic partial differential equation (PDE) models provide an efficient framework to model the large-scale traffic dynamics that lead to congestion and can support control design.
A commonly used macroscopic model is the Lighthill–Whitham–Richards (LWR) model, which is a first-order, hyperbolic PDE that captures the spatiotemporal evolution of traffic density. The LWR model assumes that the velocity of vehicles adapts instantaneously based on a prescribed fundamental diagram \cite{Lighthill1955Kinematic, Richards1956Shock}. 
Highlighted for its simplicity and effectiveness, the LWR model is commonly used in traffic control, such as in boundary ramp-metering control \cite{Tumash2019Robust, Yu2021Bilateral}, moving-bottleneck control implemented via connected and automated vehicles \cite{Block2025Analysis}, or in-domain variable speed limit (VSL) control \cite{Block2024LQ}.

With in-domain control, the actuator can have complete control over the internal traffic dynamics. One way to achieve the in-domain control for traffic models is to implement VSLs. In \cite{Carlson2011Local}, VSLs were implemented using a PI-type feedback on a discrete model to control flow rate and traffic density. Model predictive control for VSLs was used in \cite{Liu2017Model} to reduce total time spent in traffic and emissions. The use of VSL actuation to track a desired outflow was also developed in \cite{DelleMonache2017Traffic}. In \cite{Block2024LQ}, a continuous linear quadratic (LQ) controller was proposed to stabilize the LWR model through VSLs. These works, though, focus only on stabilizing the system to a desired setpoint or trajectory, not on ensuring system states stay within a desired set.

Maintaining the traffic density in free-flow can be interpreted as a distributed safety control problem, which motivates the use of control barrier functions (CBFs), which ensure the system's safety by enforcing the invariance of a safe set of states. The use of CBFs in ODE systems has been well studied. In \cite{Ames2019Control}, a CBF-based quadratic programming (QP) solver was proposed for control-affine ODE systems to solve the safe control input efficiently. The effectiveness of CBF-based methods has been verified in various microscopic driving scenarios \cite{Alan2023Control, Liu2023SafetyCritical, Cho2023Model}, where the CBF guarantees the driving safety of autonomous vehicles. The use of CBFs for control in PDE systems has been less studied, though, and mainly for boundary control, not in-domain control. In \cite{Koga2023Safe}, a second-order CBF was employed in the Stefan model at a boundary to prevent liquid freezing while driving the phase interface to a desired location. This approach was further extended to third-order moving-boundary dynamics in \cite{Koga2025Safe}. The safety control problem of an unmanned aerial vehicle with hanging loads, formulated as an ODE-PDE-ODE system, was discussed in \cite{Wang2024Safe}, in which a CBF was used to regulate the input to ensure the safety of the loads. As mentioned, these works focus on boundary control, and the in-domain PDE safety control problem has been less studied in the literature. In particular, existing approaches do not address real-time computation of in-domain safe controls for traffic PDEs.  To address the current limitations in safe, in-domain control of PDEs, we propose a real-time solvable, CBF-based method with VSLs to prevent congestion for the LWR model. The main contributions are as follows:
(i)A formulation of CBF-based in-domain safe control for the LWR traffic model, addressing a gap in existing CBF approaches that focus primarily on boundary control;
(ii) A structure-exploiting primal-dual active set (PDAS) algorithm for real-time computation of the optimal safe control input, with convergence guarantees;
(iii) A numerical validation demonstrating safety enforcement and real-time feasibility under different nominal controllers.


\section{Model and Problem Description}
\label{Section Model}

\subsection{LWR Model}
We consider the LWR model with VSL in-domain control on a straight road of length $L$ with on-ramps, where position $x\in[0,L]$ and time $t\in[0,T]$. The model is presented as
\begin{equation}
    \frac{\partial\rho}{\partial t} + \frac{\partial (uQ)}{\partial x} = q
    \label{LWR model}
\end{equation}
where $\rho(x,t): [0, L]\times[0, T]\rightarrow\mathbb{R}$ is the traffic density; $q(x,t)$ is the on-ramp flow; and $u(x,t)$ is the continuous VSL control input. $Q(x,t)$ is the traffic flow described by
\begin{equation}
    Q = f(\rho)
    \label{FD}
\end{equation}
where $f(\rho): [0,\rho_{\text{max}}]\rightarrow[0, Q_{\text{max}}]$ gives the equilibrium relationship between traffic density and flow. The maximum density, flow, and speed are $\rho_{\text{max}}, Q_{\text{max}}$, and $V_{\text{max}}$, respectively. The corresponding maximum flow
\begin{equation}
    Q_{\text{max}} = f(\rho_c)\geq f(\rho), \forall \rho \in [0,\rho_{\text{max}}]
    \label{critical}
\end{equation}
happens at the critical density $\rho_c$, where the traffic state transitions from free-flow to congestion. 
The boundary and initial conditions of \eqref{LWR model} are 
\begin{equation}
    Q(0,t) = Q_{\text{{up}}}(t),\ \rho(x,0) = \rho^0(x)
\end{equation}
where $Q_{\text{{up}}}(t)$ is the traffic flow from the upstream, and we assume that the supply of the downstream is big enough; $\rho^0(x)$ is the initial density on the road. To ensure vehicles can enter the road from upstream, we impose the boundary condition $u(0,t)=1$. Since this paper focuses on single-step control, for notational simplicity, $u(x)$ is used to denote $u(x,t)$ in the remainder of the paper.

\subsection{CBF Problem Description}

Before formulating the problem, we first introduce the CBF theory briefly for completeness. 

\emph{Control Barrier Functions}\cite{Ames2019Control}:
Denote a control system $\dot{x}=f(x,u)$ and a safe set defined by $h(x)\geqslant0$, where $h(x)$ is a continuously differentiable scalar function. Then, $h(x)$ is a CBF if there exists a selected extended class $\mathcal{K}_{\infty}$ function $\alpha$ such that the following inequality holds for any state $x$ 
\begin{equation}
    \underset{u}{\sup}~ \dot{h}(x,u) + \alpha(h(x)) \geqslant 0 
\end{equation}

For our purposes, the safety objective is to prevent congestion at the macroscopic level, thus the density $\rho$ should not exceed a certain value. Denoting the density bound by $\rho_{\text{limit}}$, the CBF and the corresponding constraint at a certain time $t_a$ can be built as
\begin{subequations}
    \begin{align}
        & \quad\quad\quad h(x) = \rho_{\text{limit}} - \rho(x,t_a) \geqslant 0
        \label{barrier function} \\
        & \dot{h}(x,u) + \alpha h(x) = -\frac{\partial\rho}{\partial t}\big|_{t=t_a} + \alpha h(x) \geqslant 0
        \label{original constraint}
    \end{align}
\end{subequations}
where we omit the time $t$ from the arguments of $h(x)$ for notational simplicity and the extended class $\mathcal{K}_{\infty}$ function is selected as a linear function with coefficient $\alpha$.

Combined with the LWR model \eqref{LWR model}, the CBF constraint \eqref{original constraint} can be recast, and we obtain the following pointwise-in-time infinite-dimensional optimization problem:
\begin{subequations}
	\begin{align}
		\underset{u(\cdot),u(0)=1}{\min} \,\, & \int_0^L \frac{1}{2}(u - u_{\text{ori}})^2\,\mathrm{d}x
		\label{cost}	\\
			\text{s.t.} \quad
			& \quad \mathcal{L}u -\alpha h + q \leqslant 0
        \label{CBF cons}	
	\end{align}
    \label{whole cbf}
\end{subequations}
where $\mathcal{L}$ is a linear operator satisfying $\mathcal{L}u = -(uQ)_x$. To get homogeneous boundary conditions, we substitute $u$ with $v$ which satisfies $v=u-1$ in the following analysis, and also denote $v_{\text{ori}}=u_{\text{ori}}-1$. Moreover, due to the unknown ramp flow, a slack variable $\xi$ is introduced to ensure the feasibility of the above optimization problem \eqref{whole cbf}. Accordingly, the problem can be reformulated as follows
\begin{subequations}
	\begin{align}
		\underset{v(\cdot),\xi(\cdot),v(0)=0}{\min} \, & \int_0^L J(v,\xi)\,\mathrm{d}x
		\label{slack cost}	\\
			\text{s.t.} \quad  
			& \mathcal{L}v + C - \xi\leqslant 0 
        \label{slack CBF cons} \\
            & \quad -\xi \leqslant 0
        \label{slack cons}	
	\end{align}
    \label{slack cbf}
\end{subequations}
where $J(v,\xi) = \frac{1}{2}(v-v_{\text{ori}})^2 + \frac{\beta}{2}\xi^2$ and $C = -\alpha h + q - Q_x$. The optimization problem \eqref{slack cbf} is solved at each time step to find the safe functional control input $v$ to keep the barrier function \eqref{barrier function} holding. Note that, at each time step, $\mathcal{L}$ and $C$ can be computed using current traffic information, and are irrelevant to the optimization variable $v$.

\section{Methodology}
\label{Section Methodology}

\subsection{Karush–Kuhn–Tucker Analysis}

According to Karush–Kuhn–Tucker (KKT) theory, we define $H(v,\xi)=J(v,\xi)+\lambda (\mathcal{L}v+C-\xi)-\mu \xi$, in which $\lambda$ and $\mu$ are non-negative functional multipliers w.r.t to the position $x$. Then, the optimal functional solution of the problem \eqref{slack cbf} satisfies the following KKT conditions
\begin{subequations}
    \begin{align}
        & \frac{\partial H}{\partial v} = v-v_{\text{ori}} + \mathcal{L}^*\lambda = 0 
        \label{KKT1}\\
        & \frac{\partial H}{\partial \xi} = \beta \xi -\lambda - \mu = 0 
        \label{KKT2}\\
        & \lambda (\mathcal{L}v+C-\xi) = 0, \lambda \geqslant 0, Lv+C-\xi \leqslant 0
        \label{KKT3}\\
        & \mu \xi = 0, \mu \geqslant 0, \xi \geqslant\ 0
        \label{KKT4}
    \end{align}
    \label{KKT}
\end{subequations}
where $\mathcal{L}^*$ is the adjoint operator of $\mathcal{L}$.

From \eqref{KKT2}, if the slack variable $\xi = 0$, we can infer that $\lambda = \mu = 0$, due to their positivity. So, we have $v = v_{\text{ori}}$ from \eqref{KKT1}, which means the CBF is not active. Otherwise, the CBF is active when the $\xi > 0$. When $\mu = 0$ in \eqref{KKT4}, after eliminating $\xi$, the KKT conditions \eqref{KKT} are equal to
\begin{subequations}
    \begin{align}
        & \quad v = v_{\text{ori}} - \mathcal{L}^*\lambda
        \label{KKT_A1}\\
        & \mathcal{L}v+C-\lambda/\beta = 0
        \label{KKT_A2}
    \end{align}
    \label{KKT_A}
\end{subequations}
Let $\mathcal{A}$, $\mathcal{I}\in[0, L]$ denote the space intervals where the CBF is active or not, respectively. Together with the above analysis, the optimization results are shown below 
\begin{equation}
    (v,\lambda) = 
    \begin{cases}
        (v_{\text{ori}},0),& x \in \mathcal{I}, \\
        \text{solution of \eqref{KKT_A}}, & x \in \mathcal{A}.
    \end{cases}
    \label{KKT results}
\end{equation}

\subsection{Discretization}
The active CBF intervals $\mathcal{A}$ need to be found to obtain the safe control input. As the variables $\rho, Q$ in the LWR model \eqref{LWR model} cannot be represented in analytical form, discretization is necessary to solve the infinite-dimensional optimization problem. It is worth emphasizing that, though the optimization problem \eqref{slack cbf} degenerates into a traditional vector-form optimization problem, which is a QP problem as well, the high dimension of the optimization variable $v$, ranges from tens to hundreds, depending on the discretization, makes it challenging to find a safe control input by classical solvers in real-time. Therefore, a tailored solution method is developed for the discretized problem.

We use Godunov's scheme to update the LWR model \eqref{LWR model}, assuming that there are $N$ cells in space, whose length is $\Delta x$. We use the subscript $i = 1,2,..., N$ to denote the component of the vector associated with the $i$-th cell. The supply and demand of each cell are given by
\begin{equation}
    S_i = 
    \begin{cases}
        f(\rho_c), & \rho_i < \rho_c \\
        f(\rho_i), & \rho_i \geqslant \rho_c
    \end{cases},\
    D_i = 
    \begin{cases}
        f(\rho_i), & \rho_i < \rho_c \\
        f(\rho_c), & \rho_i \geqslant \rho_c
    \end{cases}
    \label{SD framework}
\end{equation}
where $\rho_c$ is the critical density defined in \eqref{critical}. The flow at the interface between the $i$-th and $(i+1)$-th cells are
\begin{equation}
    Q_i = \min(D_i, S_{i+1})
\end{equation}
where the upstream flow of the first cell is given by $Q_0=\min(Q_{\text{up}}, S_1)$ and the downstream flow of the last cell is given by $Q_N=D_N$. Let control input $u_i$ adjust the flow $Q_i$, then the discrete update of LWR model \eqref{LWR model} is
\begin{equation}
    \begin{aligned}
        \frac{\partial\rho_i}{\partial t} &= q_i + \frac{u_{i-1}Q_{i-1}-u_iQ_i}{\Delta x} \\
        &= q_i + \frac{Q_{i-1}-Q_i}{\Delta x} + \frac{v_{i-1}Q_{i-1}-v_iQ_i}{\Delta x}
    \label{discrete update}
    \end{aligned}
\end{equation}
where $v_0=0$ as we claim in \cref{Section Model}, and we can infer that $C_i =  - \alpha h_i + q_i + \frac{Q_{i-1}-Q_i}{\Delta x}$. Therefore, the discrete form of the linear operator $\mathcal{L}$, which is denoted by $K$, satisfies
\begin{equation}
    K = \left(\frac{k_{ij}}{\Delta x}\right), \quad
    k_{ij} =
    \begin{cases}
    -Q_i, & j = i, \\
    Q_i,  & j = i-1, \\
    0,    & \text{otherwise}.
    \end{cases}
    \label{K}
\end{equation}
where $K$ is a $N\times N$ lower bi-diagonal matrix and $K^T$ is the discrete form of the adjoint operator $\mathcal{L}^*$. We would like to emphasize that the bi-diagonal structure of $K$ is provided by the physical properties of the traffic flow, which is important for the convergence of the proposed algorithm.

\subsection{Controller Design Algorithm}

From the activity discussion of $\xi$ in \eqref{KKT4}, we find that $\xi = \lambda/\beta$ always holds. Together with \eqref{KKT results}, the KKT conditions \eqref{KKT} are equal to finding roots of the following function
\begin{equation}
    F(v,\lambda) = 
    \begin{pmatrix}
        v - v_{ori} + K^T\lambda \\
        \lambda - \max(0,g(v,\lambda))
    \end{pmatrix}
    \label{equal_eqn}
\end{equation}
where $g(v,\lambda)=\lambda + d(Kv+C-\lambda/\beta)$ and $d$ is a positive constant. In theory, the choice of $d$ does not affect the result; however, in practical applications, an appropriate $d$ can help avoid numerical issues. We use subscripts $\mathcal{A}$ and $\mathcal{I}$ to partition the matrices or vectors on active and inactive sets. By eliminating $v$ in \eqref{KKT_A2}, on the active set $\mathcal{A}$, $\lambda$ satisfies 
\begin{equation}
    ((KK^T+I_N/\beta)\lambda)_{\mathcal{A}} = (Kv_{\text{ori}}+C)_{\mathcal{A}}
    \label{KKT_A3}
\end{equation}
where $I_N$ is a $N\times N$ identity matrix. Inspired by \cite{Hintermuller2002PrimalDual}, we propose \cref{alg:PDAS}, a PDAS-based iteration, to obtain the active set $\mathcal{A}$ and optimal control input $v$. The convergence of \cref{alg:PDAS} is proved by the following corollary:
    
\begin{algorithm}[!t] 
    \caption{PDAS-based CBF Controller} 
	\renewcommand{\algorithmicrequire}{\textbf{Input:}}
	\renewcommand{\algorithmicensure}{\textbf{Output:}}
    \begin{algorithmic}[1] 
    	\REQUIRE
    		Matrix $K$; Vectors $C,\beta$; Positive constant $d$; Initial multiplier $\lambda^0$ and responding initial control input $v^0 = v_{\text{ori}} - K^T\lambda^{0}$(e.g., $\lambda^0=0$ and $v^0 = v_{\text{ori}}$). 
    	\ENSURE
    		Safe control input $v$.
        \STATE $k \gets 0$.
        \WHILE{TRUE}
            \STATE
        	    Calculate the  active set $\mathcal{A}_k=\{i \mid g_i(v^k,\lambda^k)>0\}$ and inactive set $\mathcal{I}_k=\{i \mid g_i(v^k,\lambda^k)\leqslant0\}$.
            \STATE
                Set $\lambda^{k+1}_{\mathcal{I}_k}=0$.
            \STATE
                Solve $\lambda^{k+1}_{\mathcal{A}_k}$ by equation \eqref{KKT_A3}.
            \STATE
                Calculate $v^{k+1}$ by $v_{ori} - K^T\lambda^{k+1}$.
            \IF{$v^{k+1} = v^k$}
                \STATE \textbf{break}   
            \ELSE
                \STATE $k \gets k + 1$.
            \ENDIF     
        \ENDWHILE
        \STATE $v = v^{k+1}$.
    \end{algorithmic} 
	\label{alg:PDAS} 
\end{algorithm}

\begin{corollary}
	Let $(v^*,\lambda^*)$ denote the unique solution of function $F(v,\lambda)$ in \eqref{equal_eqn}. For arbitrary initial multiplier $\lambda^0$ and vector $v^0 = v_{\text{ori}}-K^T\lambda^0$, we have $\lambda^k\leqslant\lambda^{k+1}\leqslant\lambda^*$ for all $k\geqslant1$ and  $\lambda^k\geqslant0$ for all $k\geqslant2$.
	\label{corollary convergence}
\end{corollary}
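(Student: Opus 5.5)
The plan is to follow the M-matrix monotonicity argument of \cite{Hintermuller2002PrimalDual}, after first reducing everything to the multiplier $\lambda$ alone.

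\textbf{Reduction to $\lambda$.} Set $M = KK^T + I_N/\beta$, $b = Kv_{\text{ori}} + C$ and $w(\lambda) = b - M\lambda$. Every iterate satisfies $v^k = v_{\text{ori}} - K^T\lambda^k$; this holds for $k=0$ by the choice of $v^0$ and afterwards by line 6 of \cref{alg:PDAS}. Hence $Kv^k + C - \lambda^k/\beta = w(\lambda^k) =: w^k$ and $g(v^k,\lambda^k) = \lambda^k + d\,w^k$. By the same substitution the root $(v^*,\lambda^*)$ of \eqref{equal_eqn} is characterised by
\[
\lambda^* \geqslant 0, \quad w^* := w(\lambda^*) \leqslant 0, \quad (\lambda^*)^T w^* = 0 .
\]
The iteration itself reads: $\lambda^{k+1}_{\mathcal{I}_k} = 0$ and $M_{\mathcal{A}_k\mathcal{A}_k}\lambda^{k+1}_{\mathcal{A}_k} = b_{\mathcal{A}_k}$, which is \eqref{KKT_A3}. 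A direct consequence, used repeatedly, is that for $k\geqslant1$ we have $w^k_{\mathcal{A}_{k-1}} = 0$ and $\lambda^k_{\mathcal{I}_{k-1}} = 0$.

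\textbf{Structural step (where the bi-diagonal form of $K$ enters).} The claim is that $M$ is an M-matrix. Because $K$ is lower bi-diagonal, $KK^T$ is tridiagonal. Its only off-diagonal entries are $(KK^T)_{i,i+1} = K_{i,i}K_{i+1,i}$, which is $(-Q_i)$ times a nonnegative flow divided by $\Delta x^2$. Since all flows $Q_i \geqslant 0$, these entries are $\leqslant 0$. Moreover $M$ is symmetric positive definite because of the $I_N/\beta$ term. A symmetric positive definite Z-matrix is an M-matrix, and so is every principal submatrix. Therefore $M_{\mathcal{A}\mathcal{A}}^{-1} \geqslant 0$ entrywise and $M_{\mathcal{A}\mathcal{I}} \leqslant 0$ for any partition $(\mathcal{A},\mathcal{I})$. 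I expect this step to be the crux, since everything after it is sign bookkeeping; the only thing that must be checked is the nonnegativity of the flows entering $K$.

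\textbf{Upper bound, valid for all $k\geqslant0$.} Let $e = \lambda^{k+1} - \lambda^*$ and write $\mathcal{A} = \mathcal{A}_k$, $\mathcal{I} = \mathcal{I}_k$.
\begin{itemize}
\item On $\mathcal{I}$: $e_{\mathcal{I}} = -\lambda^*_{\mathcal{I}} \leqslant 0$.
\item On $\mathcal{A}$: writing $(M\lambda^*)_{\mathcal{A}} = b_{\mathcal{A}} - w^*_{\mathcal{A}}$ gives $M_{\mathcal{A}\mathcal{A}} e_{\mathcal{A}} = w^*_{\mathcal{A}} + M_{\mathcal{A}\mathcal{I}}\lambda^*_{\mathcal{I}}$. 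The right-hand side is $\leqslant 0$ because $w^* \leqslant 0$, $M_{\mathcal{A}\mathcal{I}} \leqslant 0$ and $\lambda^* \geqslant 0$. Applying $M_{\mathcal{A}\mathcal{A}}^{-1} \geqslant 0$ yields $e_{\mathcal{A}} \leqslant 0$.
\end{itemize}
Hence $\lambda^{k+1} \leqslant \lambda^*$ for every $k\geqslant0$.

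\textbf{Monotonicity for $k\geqslant1$.} Let $\delta = \lambda^{k+1} - \lambda^k$.
\begin{itemize}
\item On $\mathcal{I}_k$: $\delta_i = -\lambda^k_i$. If $i\in\mathcal{I}_{k-1}$, then $\lambda^k_i = 0$. If $i\in\mathcal{A}_{k-1}$, then $w^k_i = 0$, so the condition $g_i \leqslant 0$ forces $\lambda^k_i \leqslant 0$. In both cases $\delta_{\mathcal{I}_k} \geqslant 0$.
\item On $\mathcal{A}_k$: $M_{\mathcal{A}\mathcal{A}}\delta_{\mathcal{A}} = w^k_{\mathcal{A}} - M_{\mathcal{A}\mathcal{I}}\delta_{\mathcal{I}}$. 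For $i\in\mathcal{A}_k\cap\mathcal{A}_{k-1}$ we have $w^k_i = 0$. For $i\in\mathcal{A}_k\cap\mathcal{I}_{k-1}$ we have $\lambda^k_i = 0$, so $g_i > 0$ gives $w^k_i > 0$. Thus $w^k_{\mathcal{A}} \geqslant 0$. Also $-M_{\mathcal{A}\mathcal{I}}\delta_{\mathcal{I}} \geqslant 0$. Applying $M_{\mathcal{A}\mathcal{A}}^{-1} \geqslant 0$ gives $\delta_{\mathcal{A}} \geqslant 0$.
\end{itemize}
Hence $\lambda^k \leqslant \lambda^{k+1}$ for all $k\geqslant1$.

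\textbf{Nonnegativity for $k\geqslant2$.} Fix $k\geqslant1$ and check $\lambda^{k+1} \geqslant 0$ on each part of the partition.
\begin{itemize}
\item On $\mathcal{I}_k$: $\lambda^{k+1} = 0$.
\item On $\mathcal{A}_k\cap\mathcal{A}_{k-1}$: since $w^k_i = 0$, the condition $g_i > 0$ means $\lambda^k_i > 0$. By monotonicity, $\lambda^{k+1}_i \geqslant \lambda^k_i > 0$.
\item On $\mathcal{A}_k\cap\mathcal{I}_{k-1}$: $\lambda^k_i = 0$, so $\lambda^{k+1}_i \geqslant 0$ by monotonicity.
\end{itemize}
Thus $\lambda^{k+1} \geqslant 0$ for all $k\geqslant1$, i.e. $\lambda^k \geqslant 0$ for all $k\geqslant2$. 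This completes the statement. Since $\lambda^k$ is nondecreasing and bounded above by $\lambda^*$ on finitely many active-set configurations, finite termination of \cref{alg:PDAS} should follow as a remark.
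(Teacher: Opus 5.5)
Your proof is correct and follows essentially the same route as the paper's appendix. You reduce to $\lambda$ via $v^k=v_{\text{ori}}-K^T\lambda^k$ and use that $M=KK^T+I_N/\beta$ is an M-matrix; you justify this in more detail than the paper, and with the correct weak inequalities $M_{\mathcal{A}\mathcal{A}}^{-1}\geqslant0$, $M_{\mathcal{A}\mathcal{I}}\leqslant0$ rather than the paper's strict ones. You then do the same sign bookkeeping on $\mathcal{A}_k,\mathcal{I}_k$ for monotonicity and for the bound by $\lambda^*$. The remaining differences are cosmetic: you get $\lambda^{k+1}\geqslant0$ directly for each $k\geqslant1$ instead of via $\lambda^2\geqslant0$ plus monotonicity, and you read the upper bound off $w(\lambda^*)\leqslant0$ rather than from the max-reformulation of $F=0$.
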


\begin{proof}
    The proof is provided in the Appendix. 
\end{proof}

Together with the non-negativity and monotonicity of $\lambda^k$ for all $k\geqslant2$, once index $i$ is identified in the active set, it will not be classified as inactive in subsequent iterations. \cref{alg:PDAS} is terminated after a finite number of iterations, not exceeding a predefined maximum, and $(v^*,\lambda^*)$ is obtained.

\begin{remark}
Indeed, the operator $\mathcal{L}\mathcal{L}^*+\mathcal{I}_d/\beta$ is an elliptic operator, where $\mathcal{I}_d$ is the identity operator.The operator shares similar positivity properties with the matrix considered in the Appendix. We will investigate the theoretical analysis in continuous spatial domain in future work.
\end{remark}

\section{Simulation Results}
\label{Section Simulation}

\subsection{Simulation Settings}

The model parameters used in the simulation are shown in Table~\ref{Table:simulation_parameters}, where the parameter values are selected based on real-world data \cite{Block2025LQInformed}. 
The Greenshield model, the selected fundamental diagram $f(\rho)$, and the corresponding critical density $\rho_c$ are given by
\begin{equation}
        Q = f(\rho) = \rho V_{\text{max}}\bigg(1-\frac{\rho}{\rho_{\text{max}}}\bigg), \ 
        \rho_c = \frac{\rho_{\text{max}}}{2}
\end{equation}
The upstream condition $Q_{\text{{up}}}$, initial density $\rho^0$, and ramp flow $q$ of the case study are
\begin{subequations}
    \begin{align}
        & Q_{\text{{up}}}(t) = 45+5\exp(-10^{-5}\pi tL)\cdot\sin\bigg(\frac{\pi t}{40}\bigg)+\frac{20 t}{L} \\
        & \quad\quad\quad\quad\quad \rho^0(x) = 45+10\cdot\sin\bigg(\frac{\pi x}{L}\bigg)
        \label{initial free}\\
        & q(x,t) = 
        \begin{cases}
            1.2+0.3\sin(\frac{\pi t}{25}),& \text{see Table ~\ref{Table:simulation_parameters},}  \\
            0, & \text{otherwise.}
        \end{cases}
        \label{inflow}
    \end{align}
\end{subequations}
From \eqref{initial free}, the maximum initial density is less than the critical density $\rho_c=60$ veh/km, indicating that the initial condition is in free-flow and congestion observed in the case study is caused by on-ramp flow. The average ramp flow in \eqref{inflow} is $120$ veh/hr and does not exceed $500$ veh/hr, which is consistent with real-world data \cite{Block2025LQInformed}.

\begin{table}[!t]
	\centering
	\caption{Simulation Parameters}
	\begin{tabular}{c c c}
		\toprule
		Parameter & Unit & Value \\
		\midrule
        Maximum density $\rho_{\text{max}}$ & veh/km & 120 \\
		Maximum velocity $V_{\text{max}}$ & km/h & 72 \\
        Road length $L$ & m & 400 \\
        Simulation time $T$ & s & 250 \\
        Ramp location $x_q$ & m & $100 \leqslant x_q \leqslant 200$ \\
        Ramp time $t_q$ & s & $0 \leqslant t_q \leqslant 200$ \\
		\bottomrule
	\end{tabular}
	\label{Table:simulation_parameters}
\end{table}

We consider two different nominal controllers: a zero-input (no velocity regulation) and an LQ controller from \cite{Block2024LQ}. We briefly introduce the selected LQ controller. For the desired traffic density $\rho_s$, the linearized LWR model near $(\rho_s,v_s=0)$ without ramp flow can be expressed as 
\begin{equation}
    \frac{\partial\Delta\rho}{\partial t}
    = -V_{\text{max}}\bigg(1-\frac{2\rho_s}{\rho_{\text{max}}}\bigg)\frac{\partial\Delta\rho}{\partial x} - \rho_s V_{\text{max}}\bigg(1-\frac{\rho_s}{\rho_{\text{max}}}\bigg)\frac{\partial\Delta v}{\partial x}
    \label{LQ_system}
\end{equation}
where $\rho = \rho_s+\Delta\rho$ and $v = v_s+\Delta v=\Delta v$. Let $V=-V_{\text{max}}(1-\frac{2\rho_s}{\rho_{\text{max}}})$ and $B_s = - \rho_s V_{\text{max}}(1-\frac{\rho_s}{\rho_{\text{max}}})$, and we denote $y = \Delta\rho$ and $z = \frac{\partial v}{\partial x}$ (note that $v_s=0$ and $v = \Delta v$), then the equivalent state space formulation of the linearized LWR model \eqref{LQ_system} on a Hilbert space is
\begin{equation}
    \dot{y} (t) = V\frac{\mathrm{d}}{\mathrm{d}x}y(t) + B_s z(t)
    \label{Hilbert system}
\end{equation}
where $z(t)$ is the rate of change of the input. Given a positive constant $P_s$, to minimize the cost $\int_0^{\infty}P_s\lvert y \rvert^2 + \lvert z \rvert^2\,\mathrm{d}t$ for the system \eqref{Hilbert system}, the optimal control input is 
\begin{equation}
    z = \frac{\partial v}{\partial x} = -\sqrt{P_s}\frac{\exp(\frac{2B_s\sqrt{P_s}}{V}(x-L))-1}{\exp(\frac{2B_s\sqrt{P_s}}{V}(x-L))+1}\Delta\rho
    \label{LQ controller}
\end{equation}
and the original $v$ can be computed accordingly. For details and properties of the LQ controller \eqref{LQ controller}, refer to \cite{Block2024LQ}. 

The controller parameters used in the simulation, together with their physical units, are shown in Table~\ref{Table:controller_parameters}. 
\begin{table}[!t]
	\centering
	\caption{Controller Parameters}
	\begin{tabular}{c c c}
		\toprule
		Parameter & Unit & Value \\
		\midrule
        CBF density limit $\rho_{\text{limit}}$ & veh/km & 60 \\
        LQ target density $\rho_s$ & veh/km & 55 \\
        CBF linear function coefficient $\alpha$ & s$^{-1}$ & $0.8$ \\
        CBF penalty coefficient $\beta$ & m$^2\cdot$ s$^2$/veh$^2$ & $1\cdot10^{4}$ \\
        LQ tuning $P_s$ & veh$^{-2}$ & $5\cdot10^{-3}$ \\
        Positive constant $d$ in $g(v,\lambda)$ & m$^2\cdot$ s$^2$/veh$^2$ & $2\cdot10^{4}$ \\
		\bottomrule
	\end{tabular}
	\label{Table:controller_parameters}
\end{table}
The CBF density limit $\rho_{\text{limit}}$ is chosen to be equal to the critical density $\rho_c$ to prevent congestion. A relatively large CBF penalty coefficient $\beta$ is assigned to the slack variable, resulting in smaller activated slack values. As ramp flows are not modeled in the original LQ controller design, compared with \cite{Block2024LQ}, we selected a more aggressive tuning parameter $P_s$ to achieve stronger regulation capability.

\subsection{Performance Evaluation}

The simulation results of the traffic density, velocity, and speed limit are shown in Fig.~\ref{fig:density}, Fig.~\ref{fig:velocity}, and Fig.~\ref{fig:vsl}, respectively. Note that Fig.~\ref{fig:density}(\subref{fig:density_1}) and Fig.~\ref{fig:velocity}(\subref{fig:velocity_1}) show the traffic evolution in the absence of any control input under the given parameters. Due to the continuous ramp flow, congestion emerges and propagates upstream as expected.

When the CBF is applied with the zero-input controller, no congestion is observed in Fig.~\ref{fig:density}(\subref{fig:density_2}). Due to the continuous inflow from the ramp, the active region of the CBF expands to contain more vehicles, until it is activated along the entire road at around $180$ s to $200$ s. Moreover, from Fig.~\ref{fig:velocity}(\subref{fig:velocity_2}), the upstream traffic velocity is always slower than the downstream inside the CBF active region, meaning that the CBF tries to prevent vehicles from entering the active region and encourages them to exit in order to avoid congestion. 

\begin{figure}[!t]
    \centering
    \begin{subfigure}[b]{0.45\linewidth}
        \includegraphics[width=\linewidth]{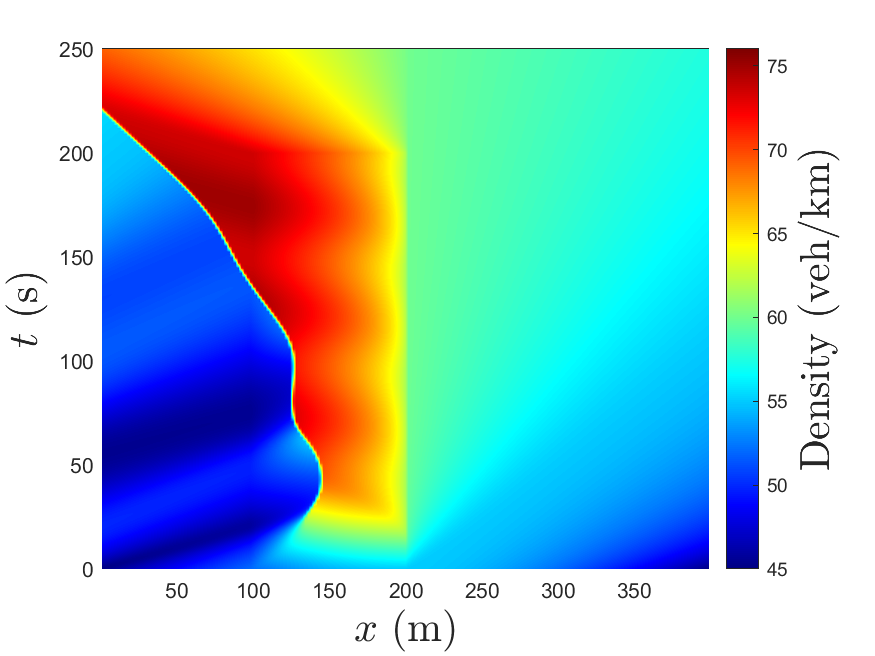}
        \subcaption{}
        \label{fig:density_1}
    \end{subfigure}
    \hfil
    \begin{subfigure}[b]{0.45\linewidth}
        \includegraphics[width=\linewidth]{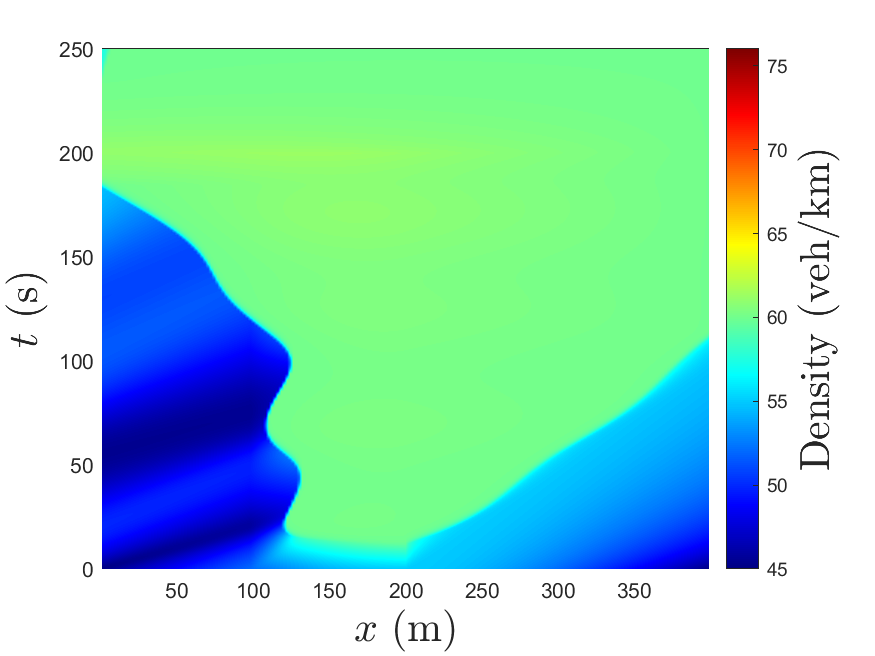}
        \caption{}
        \label{fig:density_2}
    \end{subfigure}
    \vspace{0.2cm}
    \begin{subfigure}[b]{0.45\linewidth}
        \includegraphics[width=\linewidth]{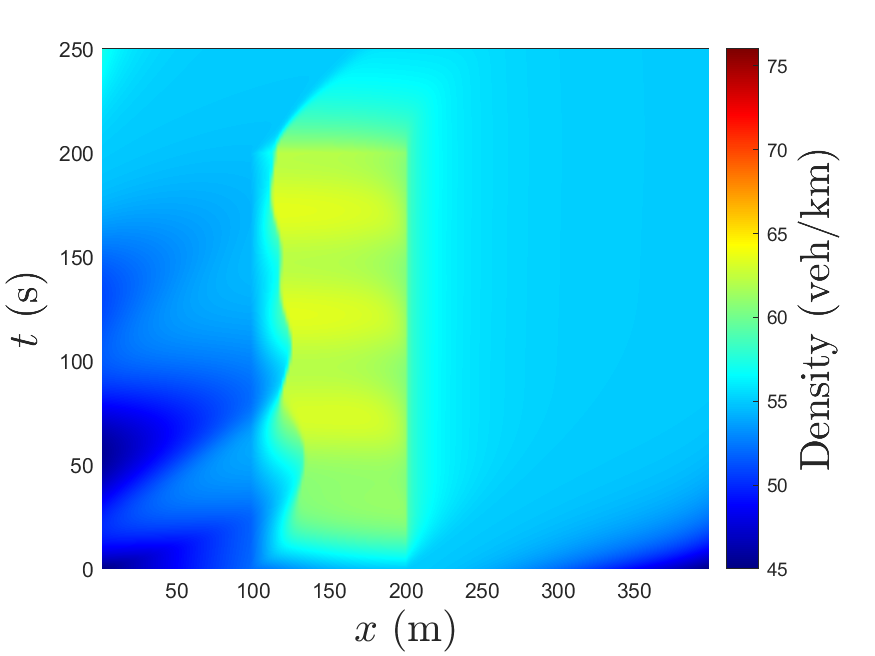}
        \caption{}
        \label{fig:density_3}
    \end{subfigure}
    \hfil
    \begin{subfigure}[b]{0.45\linewidth}
        \includegraphics[width=\linewidth]{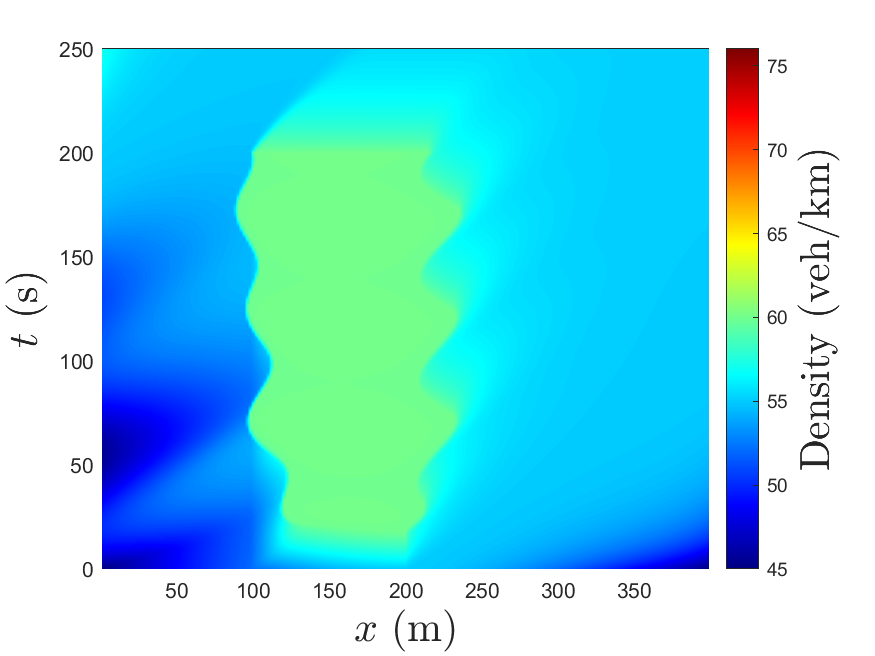}
        \caption{}
        \label{fig:density_4}
    \end{subfigure}
    \caption{Traffic density of (a)  zero-input without CBF , (b)  zero-input with CBF, (c)  LQ controller without CBF, and (d) LQ controller with CBF.}
    \label{fig:density}
\end{figure}

\begin{figure}[!t]
    \centering
    \begin{subfigure}[b]{0.45\linewidth}
        \includegraphics[width=\linewidth]{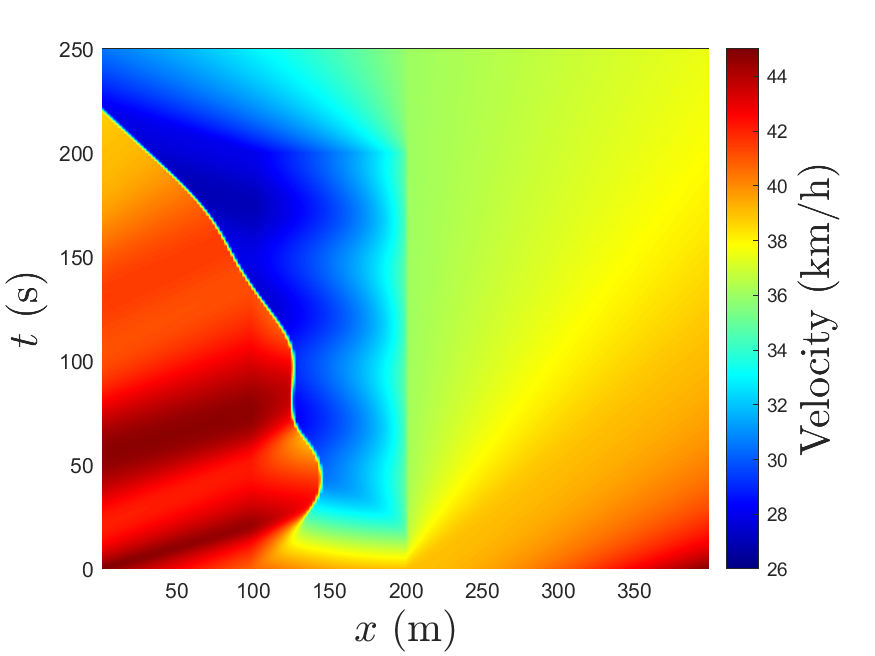}
        \caption{}
        \label{fig:velocity_1}
    \end{subfigure}
    \hfil
    \begin{subfigure}[b]{0.45\linewidth}
        \includegraphics[width=\linewidth]{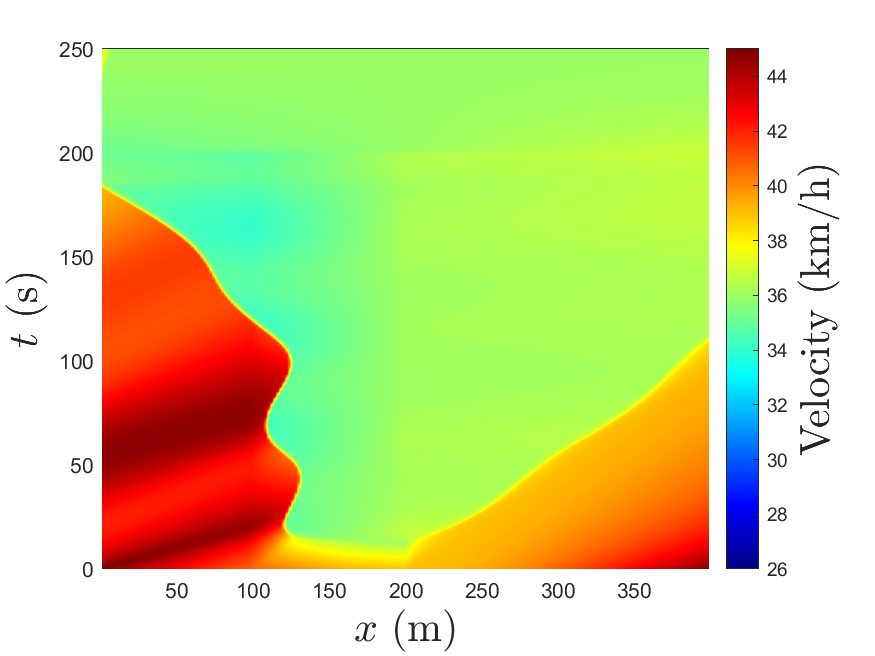}
        \caption{}
        \label{fig:velocity_2}
    \end{subfigure}
    \vspace{0.2cm}
    \begin{subfigure}[b]{0.45\linewidth}
        \includegraphics[width=\linewidth]{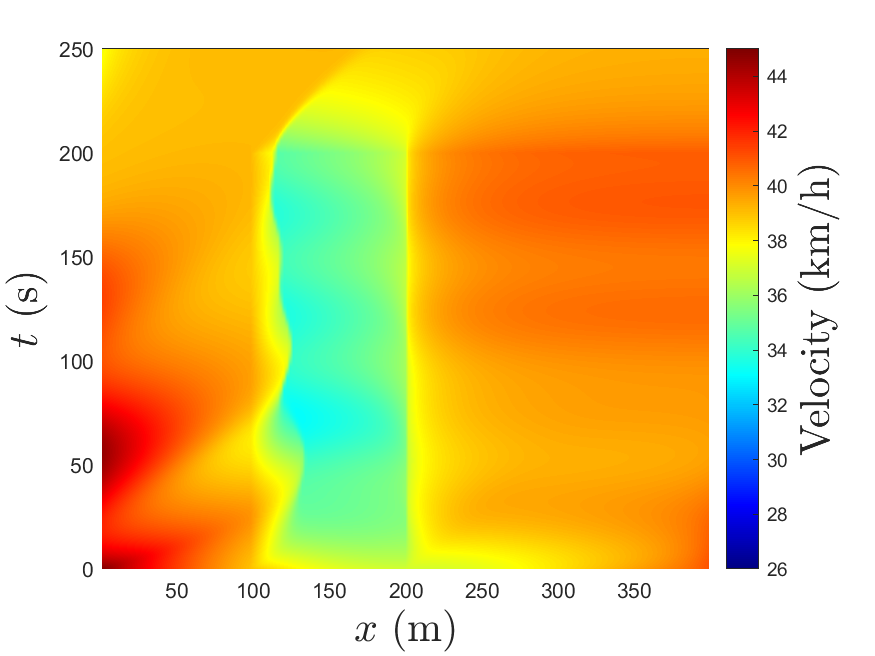}
        \caption{}
        \label{fig:velocity_3}
    \end{subfigure}
    \hfil
    \begin{subfigure}[b]{0.45\linewidth}
        \includegraphics[width=\linewidth]{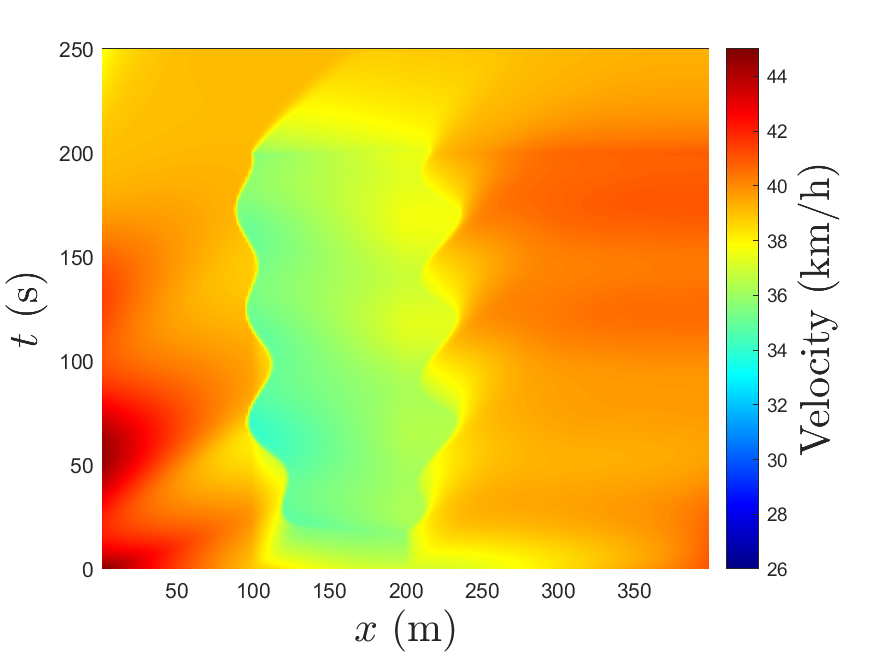}
        \caption{}
        \label{fig:velocity_4}
    \end{subfigure}
    \caption{Traffic velocity of (a)  zero-input without CBF, (b)  zero-input with CBF, (c)  LQ controller without CBF, and (d)  LQ controller with CBF.}
    \label{fig:velocity}
\end{figure}

\begin{figure}[!t]
    \centering
    \begin{subfigure}[b]{0.45\linewidth}
        \includegraphics[width=\linewidth]{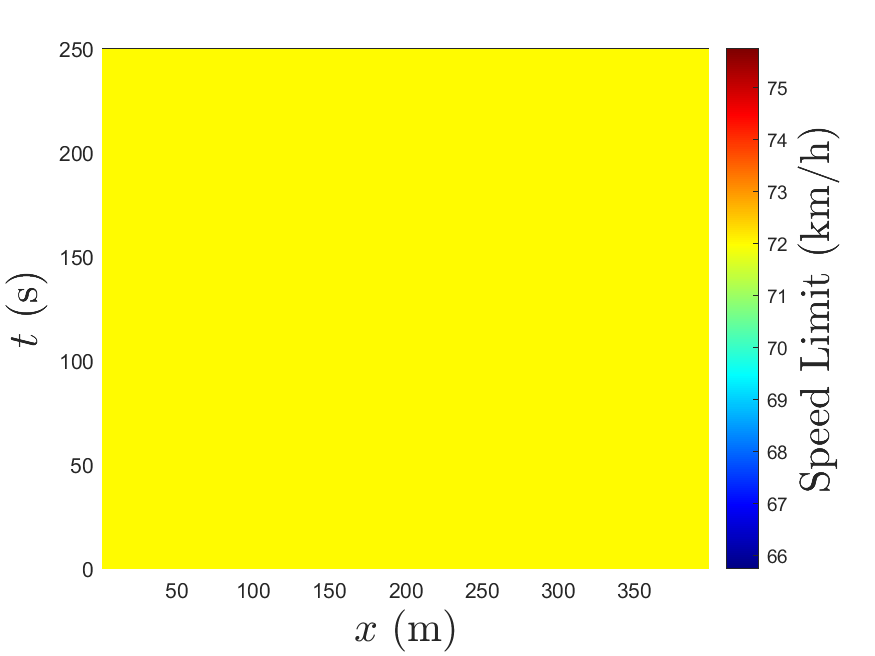}
        \caption{}
        \label{fig:vsl_1}
    \end{subfigure}
    \hfil
    \begin{subfigure}[b]{0.45\linewidth}
        \includegraphics[width=\linewidth]{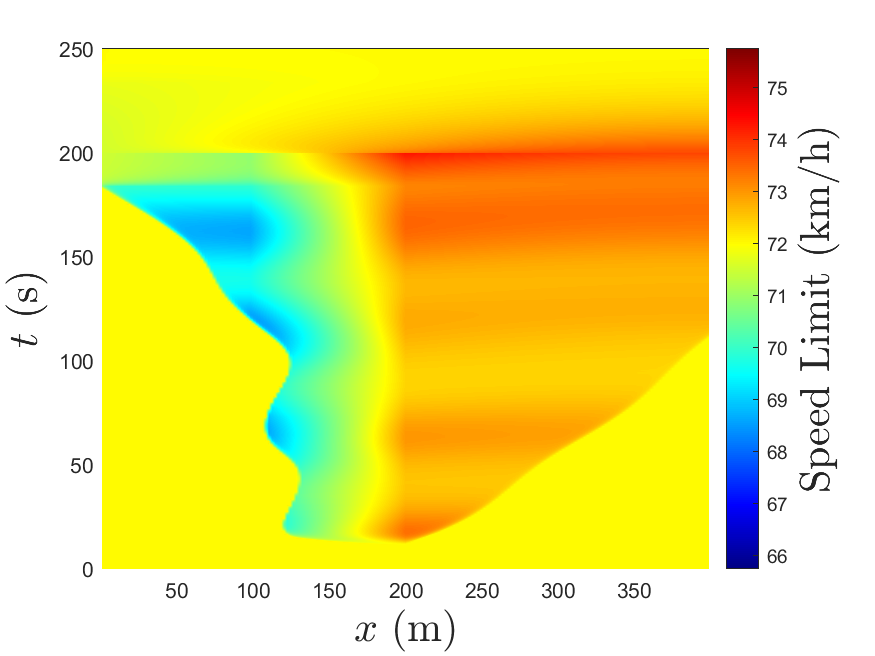}
        \caption{}
        \label{fig:vsl_2}
    \end{subfigure}
    \vspace{0.2cm}
    \begin{subfigure}[b]{0.45\linewidth}
        \includegraphics[width=\linewidth]{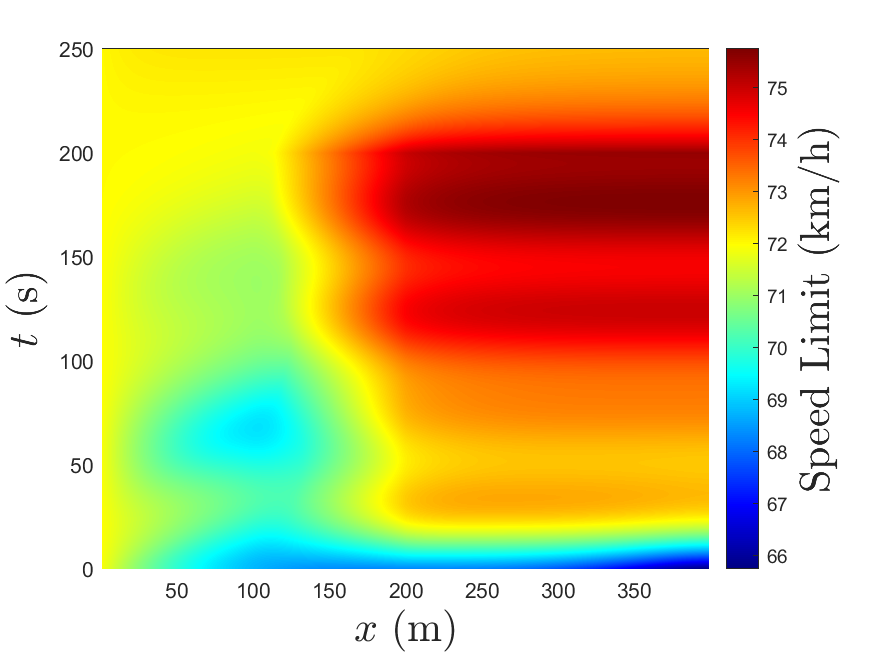}
        \caption{}
        \label{fig:vsl_3}
    \end{subfigure}
    \hfil
    \begin{subfigure}[b]{0.45\linewidth}
        \includegraphics[width=\linewidth]{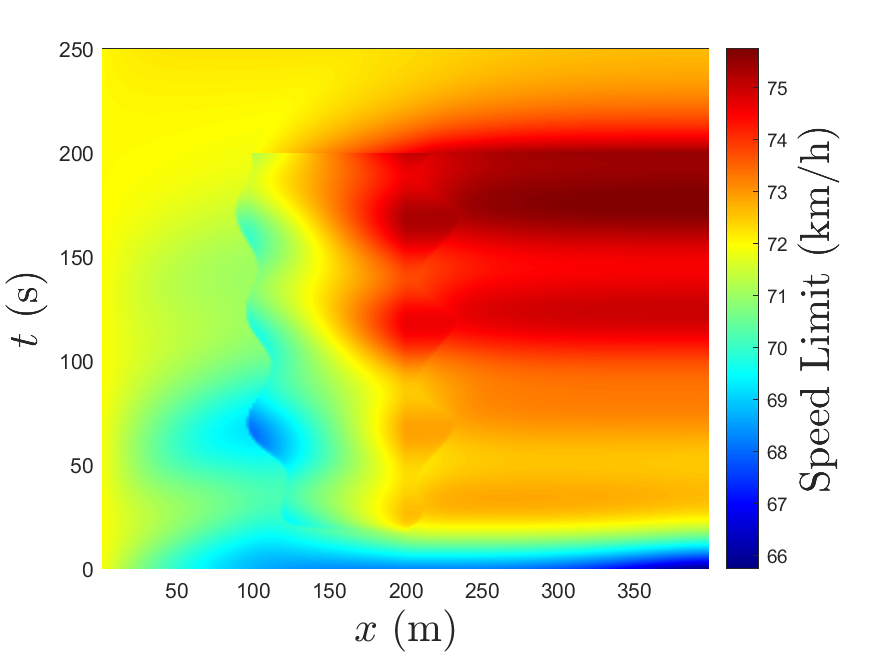}
        \caption{}
        \label{fig:vsl_4}
    \end{subfigure}
    \caption{Speed limit of (a)  zero-input without CBF, (b)  zero-input with CBF, (c)  LQ controller without CBF, and (d)  LQ controller with CBF.}
    \label{fig:vsl}
\end{figure}

On the other hand, the regulation capability of the LQ controller suppresses congestion as shown in Fig.~\ref{fig:density}(\subref{fig:density_3}). By regulating the velocity of the congested regime in Fig.~\ref{fig:velocity}(\subref{fig:velocity_3}), upstream propagation is not observed. After $200$ s, with no further on-ramp flow, the traffic density is driven to the desired $\rho_s$. However, as the ramp condition is not considered during the LQ controller design, the traffic density exhibits an overshoot due to the on-ramp flow, and the magnitude of the overshoot increases as the inflow rate increases. Because of this, slight congestion can be observed at the ramp location. 

When the combination of the LQ controller and CBF is applied, no congested regime is shown in Fig.~\ref{fig:density}(\subref{fig:density_4}). Moreover, the regulation capability of the LQ controller makes the active region of the CBF not expand continuously. Notably in Fig.~\ref{fig:velocity}(\subref{fig:velocity_4}), in the downstream region of the ramp, traffic velocity is higher over a wider area. After $200$ s, once the ramp inflow ceases, the CBF is not activated and the LQ controller drives the system to the desired $\rho_s$.

We also quantitatively evaluate the controller performance using two metrics, total time spent (TTS) and total delay (TD), whose definitions are
\begin{subequations}
    \begin{align}
        & \quad\quad\quad\quad \text{TTS} = \int_0^{T_q}\int_0^{L} \rho(x,t)\,\mathrm{d}x\,\mathrm{d}t \\
        & \text{TD} = \int_0^{T_q}\int_0^{L} \rho(x,t) (V_{\text{max}}-V(x,t))\,\mathrm{d}x\,\mathrm{d}t
    \end{align}
\end{subequations}
where $T_q=200$ represents the time the on-ramp inflow stops. TTS measures cumulative time spent by all vehicles in the system, and TD quantifies the additional time incurred compared to free-flow conditions. The quantitative results are shown in Table~\ref{Table:results}. 
\begin{table}[!t]
    \centering
    \caption{Performance under Different Simulation Cases}
    \begin{tabular}{ccc}
    \toprule
    Simulation Case & TTS~[veh$\cdot$h] & TD~[veh$\cdot$km] \\
    \midrule
        Zero-input w/o CBF & 1.2725 & 44.4322 \\
        Zero-input w/ CBF  & 1.2610 & 43.2255 \\
        LQ controller w/o CBF & 1.2379 & 41.2342 \\
        LQ controller w/ CBF  & 1.2379 &  41.2285 \\
    \bottomrule
    \end{tabular}
    \label{Table:results}
\end{table}
Lower TTS and TD indicate that the proposed CBF-based method improves traffic efficiency, with significant gains in the zero-input case. When combined with an LQ controller, performance remains essentially unchanged despite the aggressive tuning parameter $P_s$, indicating that the safety filter preserves throughput while enforcing safety.

Additional simulations were conducted with initially congested conditions. Similar behavior is observed, where the proposed method removes congestion while preserving the performance trends reported in Table~\ref{Table:results}.

\subsection{Real-time Feasibility}

We compare the proposed method with \texttt{quadprog} and \texttt{osqp} in MATLAB to showcase its real-time feasibility. $(v_{\text{ori}}, \lambda=0)$ is set as the warm start for all methods 
. Results are shown in Fig.~\ref{fig:time_analysis}, where the time ratio is defined as the computation time divided by the simulation time step.

\begin{figure}[!t]
    \centering
    \begin{subfigure}[b]{0.45\linewidth}
        \includegraphics[width=\linewidth]{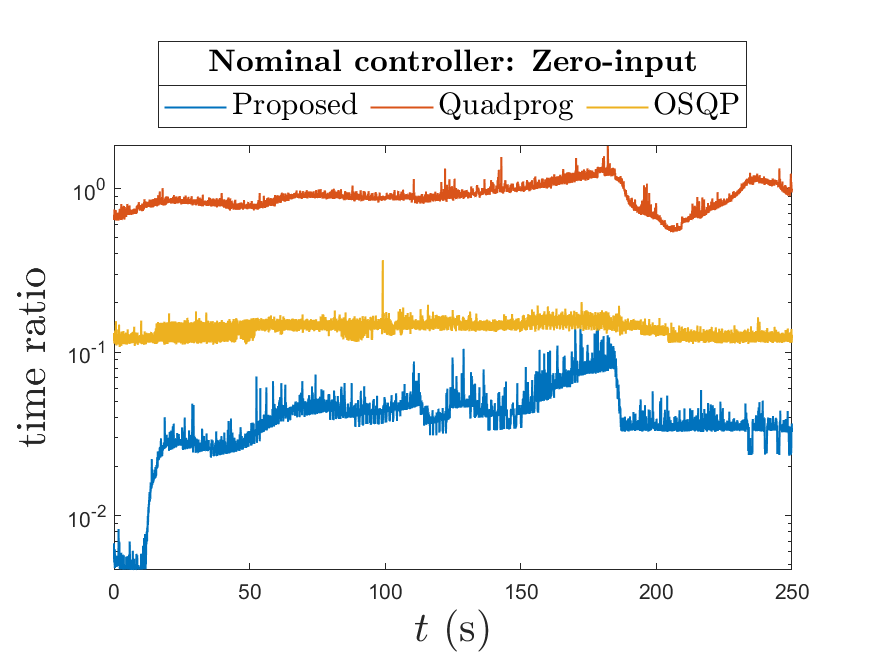}
        \caption{}
        \label{fig:time_1}
    \end{subfigure}
    \hfil
    \begin{subfigure}[b]{0.45\linewidth}
        \includegraphics[width=\linewidth]{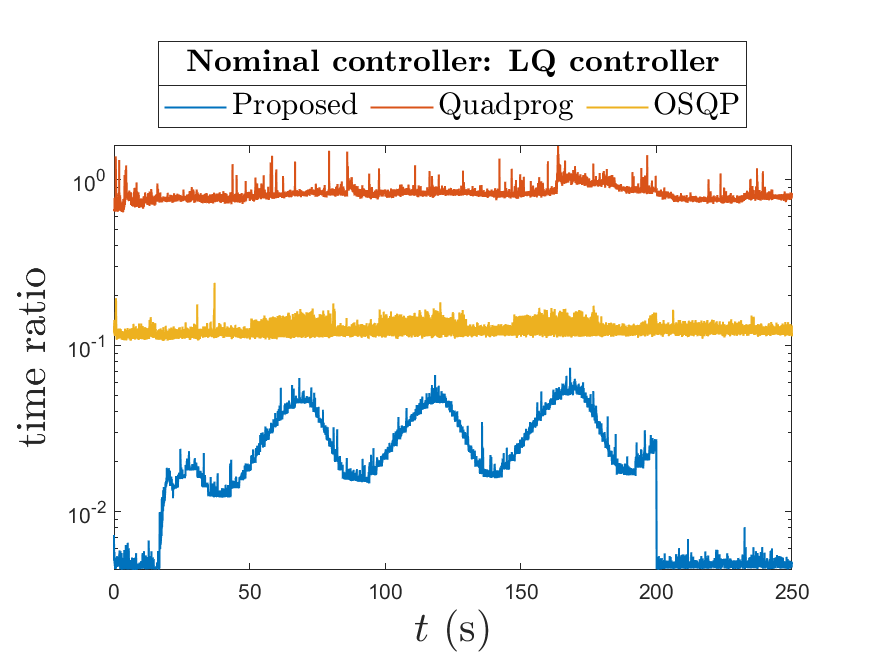}
        \caption{}
        \label{fig:time_2}
    \end{subfigure}
    \caption{Computation time comparison with different methods for nominal controllers: (a)  zero-input and (b) LQ controller.}
    \label{fig:time_analysis}
\end{figure}

The proposed method presents better computational efficiency than classical solvers for both nominal controllers, which is beneficial for further applications. By comparing the results in Fig.\ref{fig:time_analysis}(\subref{fig:time_1}) and  Fig.\ref{fig:time_analysis}(\subref{fig:time_2}), it can be concluded that a proper nominal controller is helpful to improve the computational efficiency of the proposed method. As we mentioned before, compared with the zero-input, the LQ controller can reduce the active region of the CBF. This leads to fewer iterations in \cref{alg:PDAS}, thus less computational time. The oscillations in computation time for the proposed method in Fig.\ref{fig:time_analysis}(\subref{fig:time_2}) come from the sinusoidal ramp inflow and the changing width of the active CBF set.


\section{Conclusion}
\label{Section Conclusion}

In this paper, a CBF-based method for real-time in-domain safe control of the LWR traffic model was proposed. The corresponding infinite-dimensional optimization problem was reformulated through KKT analysis and discretization, leading to a structured quadratic program. A PDAS-based algorithm was developed to efficiently compute the pointwise-in-time safe control input, with convergence guarantees. Simulation results demonstrated that congestion can be effectively prevented using CBF and that when combined with an LQ controller, the performance remains largely unaffected. In addition, improved computational efficiency is achieved compared to classical QP solvers. Future work will focus on the following directions: (i) theoretical analysis of the proposed safety filter in the continuous spatial domain and extension to higher-dimensional models; (ii) improvement of the iteration policy, incorporation of practical speed constraints, and feasibility analysis under different parameter settings.





\section*{APPENDIX}
\section*{Proof of Corollary~\ref{corollary convergence}}

\begin{proof}
    Let $M=KK^T+I_N/\beta$, it is easy to show that $M$ is an M-matrix from \eqref{K}. Note that $v^{k} = v_{\text{ori}} - K^T\lambda^{k}$ always holds, and as $M$ is an M-matrix, we have $M_{\mathcal{A}\mathcal{A}}^{-1}>0$ (denoted by $M_{\mathcal{A}}^{-1}$ later) and $M_{\mathcal{A}}^{-1}M_{\mathcal{A}\mathcal{I}}<0$ for any partition $\mathcal{A}$ and $\mathcal{I}$. Firstly, we show that
    \begin{equation}
        \lambda^k_i \neq 0 \implies (Kv^k+C-\lambda^k/\beta)_i = 0, \, \forall k\geqslant1
        \label{proof1}
    \end{equation}
    For $k\geqslant1$, as $ \lambda^k_{\mathcal{I}_{k-1}} = 0$ holds in each iteration, we have $i\in{\mathcal{A}_{k-1}}$ if $\lambda^k_i \neq 0$ and equation \eqref{KKT_A3} holds. This implies
    \begin{equation}
        \begin{split}
            &(Kv_{\text{ori}}+C-(KK^T+I_N/\beta)\lambda^k)_{\mathcal{A}_{k-1}} \\
            &\quad = (Kv^k+C-\lambda^k/\beta)_{\mathcal{A}_{k-1}}=0
        \end{split}
        \label{proof1_1}
    \end{equation}
    which can lead to $(Kv^k+C-\lambda^k/\beta)_i = 0$ if $\lambda^k_i \neq 0$.

    Next, we check the monotonicity of $\lambda^k$ for $k \geqslant 1$. For $i\in\mathcal{I}_k$, if $\lambda^k_i>0$, then we have $g_i(v^k,\lambda^k)=\lambda^k_i + d(Kv^k+C-\lambda^k/\beta)_i>0$ by \eqref{proof1}, which is contradictory. This shows that $\lambda^k_i\leqslant0$ for $i\in\mathcal{I}_k$, and implies 
    \begin{equation}
    \lambda^{k+1}_{\mathcal{I}_k}= 0 \geqslant\lambda^k_{\mathcal{I}_k}
    \label{proof2_1}
    \end{equation}
    For $i\in\mathcal{A}_k$, we have $g_i(v^k,\lambda^k)=\lambda^k_i + d(Kv^k+C-\lambda^k/\beta)_i > 0$, hence either $\lambda^k_i > 0$ implying $(Kv^k+C-\lambda^k/\beta)_i = 0$ by \eqref{proof1}, or $(Kv^k+C-\lambda^k/\beta)_i > 0$. Therefore, by \eqref{KKT_A3}, $(M\lambda^{k+1})_{\mathcal{A}_k} = (Kv_{\text{ori}}+C)_{\mathcal{A}_k}=(M\lambda^k+Kv^k+C-\lambda^k/\beta)_{\mathcal{A}_k}\geqslant(M\lambda^k)_{\mathcal{A}_k}$, and we have
    \begin{equation}
    \begin{split}
        &(M(\lambda^{k+1}-\lambda^k))_{\mathcal{A}_k} = M_{\mathcal{A}_k}\big(
        (\lambda^{k+1}-\lambda^k)_{\mathcal{A}_k} \\
        &\quad + M_{\mathcal{A}_k}^{-1} M_{\mathcal{A}_k\mathcal{I}_k}(\lambda^{k+1}-\lambda^k)_{\mathcal{I}_k} \big) \geqslant0
    \end{split}
        \label{proof2_2}
    \end{equation}
    which follows that $\lambda^{k+1}_{\mathcal{A}_k} \geqslant \lambda^k_{\mathcal{A}_k}$, by combining \eqref{proof2_1} and the M-matrix properties. Thus, the monotonicity of $\lambda^k$ is shown.

    Then, we prove the non-negativity of $\lambda^k$ for $k\geqslant2$. For any $i$ satisfying $\lambda^1_i<0$, we have $g_i(v^1,\lambda^1)<0$ by \eqref{proof1}, thus $i\in\mathcal{I}_1$ and $\lambda^2_i=0$. Combined with the monotonicity of $\lambda^k$, $\lambda^k\geqslant\lambda^2\geqslant0$ for all $k\geqslant2$.

    Finally, we verify that $\lambda^*\geqslant\lambda^k$ for $k\geqslant1$. For inactive set $\mathcal{I}_{k-1}$, $\lambda^*_{\mathcal{I}_{k-1}}\geqslant0=\lambda^k_{\mathcal{I}_{k-1}}$ . For active set $\mathcal{A}_{k-1}$, note the definition of $(v^*,\lambda^*)$, we have $\lambda^*= \max(0,g(v^*,\lambda^*))=\max(0,\lambda^*+d(Kv_{\text{ori}}+C-M\lambda^*))$. As equation \eqref{KKT_A3} holds, we infer that $\lambda^*_{\mathcal{A}_{k-1}}=\max(0,\lambda^*_{\mathcal{A}_{k-1}}+d(M(\lambda^k-\lambda^*))_{\mathcal{A}_{k-1}})$, which implies that $(M(\lambda^k-\lambda^*))_{\mathcal{A}_{k-1}} < 0$. Similarly to \eqref{proof2_2}, we have $\lambda^*\geqslant\lambda^k$ for $k\geqslant1$.
\end{proof}


\bibliographystyle{Bibliography/IEEEtran}
\bibliography{Bibliography/IEEEabrv, Bibliography/mybib}

\end{document}